\documentclass[runningheads]{llncs} %

\usepackage[utf8]{inputenc}
\usepackage[T1]{fontenc}
\usepackage[english]{babel}
\usepackage[final,babel=true]{microtype}      %
\usepackage{csquotes}
\usepackage{xparse}
\usepackage{nicefrac}

\usepackage{amsmath}
\usepackage{amssymb}
\usepackage{bm}
\usepackage{textcomp}
\usepackage{booktabs}
\usepackage{multirow}
\usepackage[ruled,linesnumbered,nofillcomment]{algorithm2e}

\usepackage[pdftex]{graphicx}
\usepackage[table]{xcolor}
\usepackage{tikz}
\usepackage{floatrow}
\usepackage[unicode]{hyperref}

\floatsetup[table]{style=plaintop}
\newfloatcommand{capbtabbox}{table}[][\FBwidth]

\SetAlFnt{\scriptsize}
\SetAlCapFnt{\footnotesize}
\SetAlTitleFnt{\footnotesize}

\usetikzlibrary{calc}

\definecolor{C0}{HTML}{e41a1c}
\definecolor{C1}{HTML}{377eb8}

\spnewtheorem*{causalq}{Causal Question}{\itshape}{\rmfamily}

\newcommand{\xisim}{\ensuremath \mathbf{x}_i^\text{sim}}
\newcommand{\xiobs}{\ensuremath \mathbf{x}_i^\text{obs}}
\newcommand{\xihold}{\ensuremath \mathbf{x}_i^\text{holdout}}

\newcommand{\xis}{\ensuremath x_{\mathcal{S}}^{\prime}}
\newcommand{\xes}{\ensuremath x_{\bar{\mathcal{S}}}}
\newcommand{\indi}{\ensuremath \perp \!\!\!\perp}
\newcommand{\vadas}{\ensuremath \mathrm{ADAS}}
\newcommand{\vz}{\ensuremath \mathbf{z}}

\DeclareDocumentCommand{\expect}{om}{%
\ensuremath\mathbb{E}%
\IfNoValueTF {#1}{}{%
_{#1}}%
\left[ {#2} \right]
}

\begin{document} %
\title{Estimation of Causal Effects in the Presence of Unobserved Confounding
in the Alzheimer's Continuum}
\titlerunning{Estimation of Causal Effects in the Alzheimer's Continuum}
\author{Sebastian P{\"{o}}lsterl \and
    Christian Wachinger} %
\authorrunning{S.~P{\"{o}}lsterl and C.~Wachinger} %
\institute{Artificial Intelligence in Medical Imaging (AI-Med),\\
    Department of Child and Adolescent Psychiatry,\\
    Ludwig-Maximilians-Universit{\"{a}}t, Munich, Germany\\
    \email{\{sebastian.poelsterl,christian.wachinger\}@med.uni-muenchen.de}} %
\maketitle              %
\begin{abstract}
Studying the relationship between neuroanatomy and cognitive decline
due to Alzheimer's
has been a major research focus in the last decade.
However, to infer cause-effect relationships rather than simple associations
from observational data,
we need to (i) express the causal relationships leading
to cognitive decline in a graphical model,
and (ii) ensure the causal effect of interest is identifiable from
the collected data.
We derive a causal graph from the current clinical knowledge
on cause and effect in the Alzheimer's disease continuum,
and show that identifiability of the causal effect %
requires all confounders to be known and measured.
However, in complex neuroimaging studies, we neither know all potential confounders nor
do we have data on them.
To alleviate this requirement,
we leverage the dependencies among multiple causes
by deriving a substitute confounder via
a probabilistic latent factor model.
In our theoretical analysis, we prove that using the
substitute confounder enables identifiability of
the causal effect of neuroanatomy on cognition.
We quantitatively evaluate the effectiveness
of our approach on semi-synthetic data, where we
know the true causal effects, and illustrate its use
on real data on the Alzheimer's disease continuum, where it reveals
important causes that otherwise would
have been missed.
\end{abstract} %
\section{Introduction}

The last decade saw an unprecedented increase in large
multi-site neuroimaging studies, which opens the possibility
of identifying disease predictors with low-effect sizes.
However, one major obstacle to fully utilize this data is confounding.
The relationship between a measurement and an outcome is confounded if
the observed association is only due to
a third latent random variable, but there is no direct causal link
between the measurement and outcome.
If confounding is ignored, investigators will likely
make erroneous conclusions, because the observed data
distribution is compatible with many -- potentially contradictory
-- causal explanations, leaving us with no way to
differentiate between the true and false effect on the basis of
data. In this case, the causal effect is unidentifiable~\cite{Pearl2000}.

It is important to remember that what is, or is not, regarded as a
confounding variable is relative and depends on the
goal of the study.
For instance, age is often considered a confounder
when studying Alzheimer's disease (AD), but
if the focus is age-related cognitive decline in a healthy
population, age is not considered a confounder~\cite{Lockhart2014}.
Therefore, it is vital to state the causal question being studied.

Causal inference addresses confounding in a principal
manner and allows us to determine which cause-effect
relationships can be identified from a given dataset.
To infer causal effects from observational data,
we need to rely on expert knowledge and untestable
assumptions about the data-generating process to build
the causal graph linking causes, outcome, and other variables~\cite{Pearl2000}.
One essential assumption to estimate causal effects from
observational data is that of \emph{no unmeasured confounder}~\cite{Pearl2000}.
Usually, we can only identify causal effects if we
know and recorded all confounders.
However, analyses across 17 neuroimaging studies revealed
that considerable bias remains in volume and
thickness measurement after adjusting for age, gender, and
the type of MRI scanner~\cite{Wachinger2020}.
Another study on confounders in UK Biobank brain imaging
data identified hundreds of potential confounders just related to the
acquisition process researchers
would need to account for~\cite{AlfaroAlmagro2021}.
These results suggest that all factors
contributing to confounding in neuroimaging are not yet fully understood,
and hence the premise of \emph{no unmeasured confounder} is most likely
going to be violated.%

In this paper, we focus on the problem of estimating causal
effects of neuroanatomical measures on cognitive decline due to
Alzheimer's in the presence
of \emph{unobserved confounders}.
To make this feasible, we derive a causal graph
from domain knowledge on the Alzheimer's disease continuum
to capture known disease-specific relationships.
While causal affects are generally unidentifiable in the presence
of unobserved confounding, we will illustrate that we can
leverage the dependencies among multiple causes
to estimate a latent substitute confounder via a Bayesian
probabilistic factor model.
In our experiments, we quantitatively demonstrate the effectiveness
of our approach on semi-synthetic data, where we
know the true causal effects, and illustrate its use
on real data on the Alzheimer's disease continuum,
where our analyses reveal important causes of cognitive function
that would otherwise have been missed.

\paragraph{Related Work.}
Despite the importance of this topic, there has been little
prior work on causal inference for estimating causal effects in neuroimaging.
In contrast to our approach, most of the previous works assume that all confounding variables are known and have been measured.
The most common approach for confounding adjustment
is regress-out. In regress-out, the original measurement (e.g. volume or thickness)
is replaced by the residual of a regression model
fitted to estimate the original value from the confounding
variables.
In~\cite{dukart2011age}, the authors use linear regression to account
for age, which has been extended to additionally account for gender in~\cite{Koikkalainen2012}.
A Gaussian Process (GP) model has been proposed in~\cite{Kostro2014}
to adjust for age, total intracranial volume, sex, and MRI scanner.
Fortin~et~al.~\cite{fortin2018harmonization} proposed a linear mixed effects model
to account for systematic differences across imaging sites.
This model has been extended in~\cite{Wachinger2020} to account for both observed
and unobserved confounders.
In~\cite{Snoek2019}, linear regression is used to regress-out the
effect of total brain volume.
In addition to the regress-out approach, analyses can be adjusted for confounders
by computing instance weights, %
that are used in a downstream classification or
regression model to obtain a pseudo-population that is approximately balanced
with respect to the confounders~\cite{linn2016addressing,rao2017predictive}.
A weighted support vector machine to adjust for age
is proposed in~\cite{linn2016addressing}.
In \cite{rao2017predictive}, weighted GP regression is proposed
to account for gender and imaging site effects.
We note that none of the work above studied whether causal
effects can actually be identified from observed data using
the theory of causal inference.

\section{Methods}

Causal inference from observational data comprises multiple steps,
(i) defining the causal question and its associated causal graph,
(ii) determining under which conditions the question can be answered
from real world data,
and (iii) estimating the causal effects via modelling.
We denote random variables with uppercase letters and specific values taken
by the corresponding variables with lowercase letters.
We distinguish between real-valued subcortical volume ($X_1^v,\ldots,X_{D_1}^v$)
and cortical thickness ($X_1^t,\ldots,X_{D_2}^t$) measurements. We denote by
$X_1,\ldots,X_D$ all measurements, irrespective of their type
($D = D_1 + D_2$).
Next, we will specify our causal question and determine when
the causal effect of a subset of measurements on an outcome
is identifiable using Pearl's do-calculus~\cite{Pearl2000}.

\subsection{The Causal Question and Its Associated Graph}
\begin{causalq}
    What is the average causal effect of
    increasing/decreasing the volume or thickness of a subset of
    neuroanatomical structures on the
    Alzheimer's Disease Assessment Scale
    Cognitive Subscale 13 score (ADAS; \cite{Mohs1997}) in
    patients with an Alzheimer's pathologic change~\cite{Jack2018}?
\end{causalq}

The gold standard to answer this question would be
a randomized experiment, where subjects'
volumes and thicknesses are randomly assigned.
As this is impossible, we have to resort to observational data.
To estimate causal effects from observational data,
we need to rely on expert knowledge to build
the causal graph linking causes, outcome, and other variables~\cite{Pearl2000}.
Fig.~\ref{fig:causal_graph} depicts the graph related to our causal question.
We explain our reasoning below.

Our causal question already determines that the causal graph needs to
comprise ADAS (the outcome),
measures $X_1,\ldots,X_D$ (the causes),
and the level of beta amyloid 42 peptides (A$\beta$), which determines
whether a patient has an Alzheimer's pathologic change~\cite{Jack2018}.
To link A$\beta$ with the remaining variables, we rely on expert
knowledge, namely that $A\beta$ causes levels of Tau
phosphorylated at threonine 181 (p-Tau), which in turn
causes neurodegeneration that ultimately results
in cognitive decline~\cite{Jack2018,Jack2013}.
Moreover, we consider that the patient's
levels of A$\beta$ and p-Tau are determined by the
allelic variant of apolipoprotein E (ApoE; \cite{Long2019}), among other
unobserved common causes (dashed line), and
that aging influences the neuroanatomy, amyloid and tau pathology~\cite{Barnes2010,Crary2014,Rodrigue2009}.
Beside biological relationships, we also include demographic
and socio-economic factors.
In particular, p-Tau levels, brain size, and the level
of education is known to differ in males and females~\cite{Ferretti2020}.
We model resilience to neurodegeneration by including
years of education (Edu) as a proxy for cognitive reserve,
and total intracranial volume (TIV) as a proxy for brain reserve,
where the latter is only causal for volume measurements
$X_1^v,\ldots,X_{D_1}^v$~\cite{Stern2020}.
Finally, fig.~\ref{fig:causal_graph} reflects that
age is a known confounder of the relationship between neuroanatomy and cognition
~\cite{Hedden2004}.
However, as outlined in the introduction, the full set of confounders
is extensive and most of them are unmeasured.
Therefore, we assume an unknown and
unobserved set of additional confounders ${U}$.

\begin{figure}[tb]
    \centering
    \includegraphics[scale=0.465]{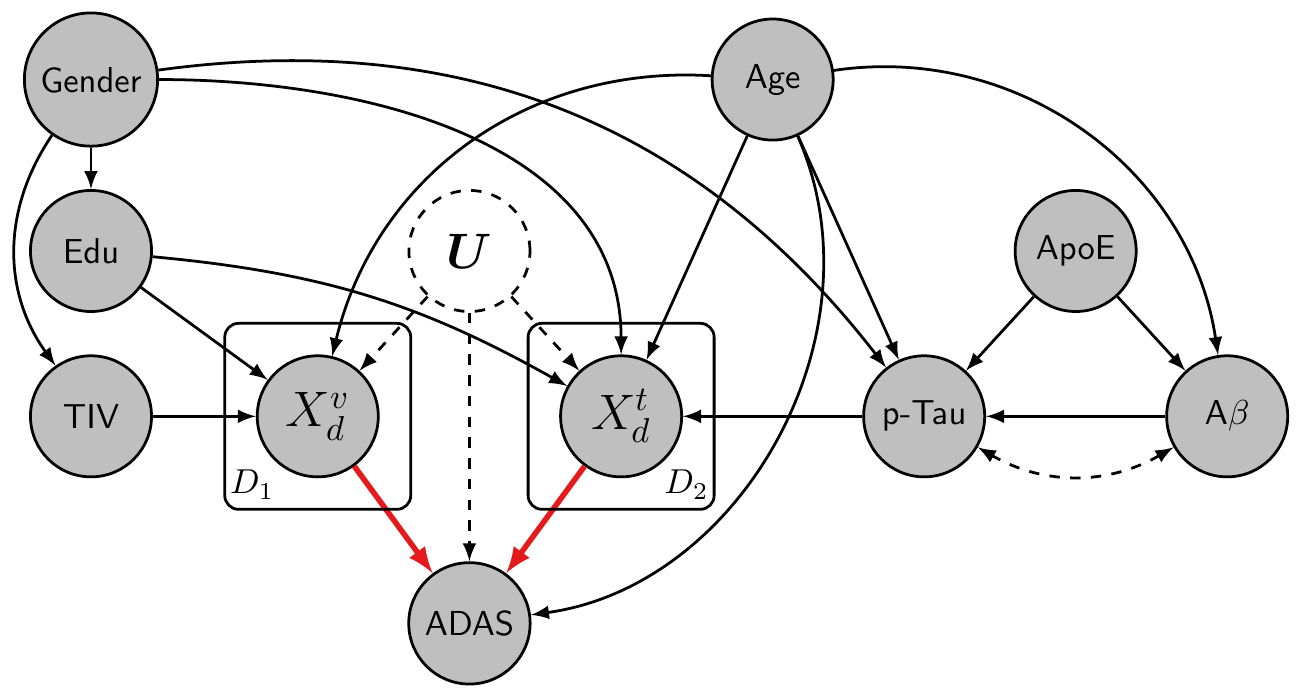}%
    \caption{\label{fig:causal_graph}%
    Causal graph used to estimate the causal effect (red arrow) of
    subcortical volume ($X_d^v$) and cortical thickness ($X_d^t$)
    on cognitive function (ADAS)
    in the presence of an unknown and unobserved confounder
    $U$.
    Exogenous variables irrelevant
    for estimating the causal effect of interest are not shown.
    Circles are random variables and
    arrows causal relationships. Filled circles are observed,
    transparent circles are hidden,
    bidirectional edges denote unobserved common causes.
    }
\end{figure}

\subsection{Identifiability in the Presence of an Unobserved Confounder}
Formally, the causal question states that we want to
use the observed data to
estimate the average causal effect
that a subset
$\mathcal{S} \subset \{X_1,\ldots,X_D\}$
of neuroanatomical structures
have simultaneously on the ADAS score:
\begin{equation}\label{eq:ace}
    \expect{\vadas\,|\, do(X_{\mathcal{S}} = \xis)}
    = \int \text{adas} \cdot P(\text{adas}\,|\,do(\xis))\,d\text{adas} .
\end{equation}
The do-operator states that we are interested
in the post-intervention distribution of
ADAS, induced by the intervention that sets the neuroanatomical measures
$X_{\mathcal{S}}$ equal to $\xis$.
The central question in causal inference is that
of identification: Can the post-intervention distribution
$P(\text{adas}\,|\,do(x))$ be estimated from data governed by the
observed joint distribution over $X$ and $\vadas$?

Inspecting fig.~\ref{fig:causal_graph} reveals that the
relationship between any $X_d$ and $\vadas$ is confounded by the known
confounder age, but also the unknown confounder $U$.
Knowing that the causal effect of $X_d$ on $\vadas$ is unidentifiable
in the presence of unobserved confounding, it initially appears
that our causal question cannot be answered~\cite[Theorem 3.2.5]{Pearl2000}.
Next, we will detail how we can resolve this issue by
exploiting the influence that the unknown confounder has on multiple
causes simultaneously.
The full procedure is outlined in algorithm~\ref{algo:deconfounder}.

\subsection{Estimating a Substitute Confounder}

By assuming that the confounder $U$ is unobserved,
we can only attempt to estimate causal effects
by building upon assumptions on the data-generating process.
Therefore, we assume that the data-generating process is
faithful to the graphical model in fig.~\ref{fig:causal_graph},
i.e., statistical independencies in the
observed data distribution imply missing causal relationships in
the graph. %
In particular, this implies that there is a common unobserved
confounder $U$ that is shared among all causes $X$ and
that there is no unobserved confounder that affects a single cause.
Here,
causes $X$ are image-derived volume and thickness measurements,
and we require that the unknown confounder affects
multiple brain regions and not just a single region.
This assumption is plausible, because common sources of confounding
such as scanner, protocol, and aging affect the brain
as a whole and not just individual regions~\cite{Barnes2010,Stonnington2008}.
Based on this assumption, we can exploit the fact that
the confounder induces dependence among multiple causes.

\begin{algorithm}[tb]
    \DontPrintSemicolon
    \caption{\label{algo:deconfounder}
    Causal inference with unobserved confounding.}
    \KwIn{%
        Neuroanatomical measures $\mathbf{X} \in \mathbb{R}^{N \times D}$,
        direct influences on neuroanatomy
        $\mathbf{F} \in \mathbb{R}^{N \times P}$,
        ADAS scores
        $\mathbf{y} \in \mathbb{R}^N$;
        values for intervention $\xis$;
        $\tau$ minimum Bayesian p-value for model checking.
    }
    \KwOut{Estimate of $\expect{\vadas\,|\, do(X_{\mathcal{S}} = \xis)}$.}
    \BlankLine
    Sample a random binary matrix $\mathbf{H} \in \{0; 1\}^{N \times D}$ and
    split data into $\mathbf{X}^\text{obs} = (1 - \mathbf{H}) \odot \mathbf{X}$
    and $\mathbf{X}^\text{holdout} = \mathbf{H} \odot \mathbf{X}$. \;
    Fit PLFM with parameters $\theta$ to $\mathbf{X}^\text{obs}$ and $\mathbf{F}$. \;
    Simulate $M$ replicates of $\mathbf{x}_i$ by drawing from posterior predictive distribution $P(\xisim\,|\,\xiobs) =
    \int P(\xisim\,|\,\theta) P(\theta\,|\,\xiobs)\,d\theta$. \;
    For each observation, estimate Bayesian p-value using test statistic in \eqref{eq:bayesian-p-value}:
    $p_{B_i} \approx \frac{1}{M} \sum_{m=1}^M I(
        T(\mathbf{x}_{i,m}^\text{sim}) \geq T(\xihold))$. \;
    \If{$\frac{1}{N} \sum_{i=1}^N p_{B_i} > \tau$}
    {
        Estimate substitute confounders $\hat{\mathbf{Z}} = \expect{\mathbf{Z}\,|\,\mathbf{X}^\text{obs}, \mathbf{F}}$ by PLFM. \;
        Fit a regression model $f\colon \mathbf{x} \mapsto \vadas$,
        using the residuals defined in \eqref{eq:residuals}.
         \;
        $\expect{\vadas\,|\, do(X_{\mathcal{S}} = \xis)} \approx \frac{1}{N} \sum_{i=1}^N f(r(\tilde{\mathbf{x}}_i, \mathbf{f}_i))$, where $\tilde{\mathbf{x}}_i$ equals
        $\mathbf{x}_i$, except for features in $\mathcal{S}$, which are set to $\xis$. \;
    }
\end{algorithm}

From fig.~\ref{fig:causal_graph} we can
observe that $U$, age, education, gender, and p-Tau are
shared among all causes $X_1,\ldots,X_D$.
Given these parents, denoted as $PA_{X_1,\ldots,X_D}$, the causes become conditionally independent:
\begin{equation}\label{eq:condprob}%
    P(x_1, \ldots x_D\,|\,PA_{X_1,\ldots,X_D})
    = {\textstyle
    \prod_{d=1}^D P(x_d\,|\,PA_{X_1,\ldots,X_D})
    }.
\end{equation}
The key realization of our proposed method is that the conditional
probability \eqref{eq:condprob}, which is derived solely from the causal graph
in fig.~\ref{fig:causal_graph}, has the same form
as the conditional distribution of a probabilistic latent factor model (PLFM).
Therefore, we can utilize this connection to estimate a substitute
confounder $\vz$ for the unobserved confounder $U$ via a latent factor model.

The theoretical proof for this approach is due to
Wang and Blei~\cite{Wang2019} who showed that
the latent representation of any PLFM does indeed
render the relationship between neuroanatomical measures
and ADAS unconfounded
if (i) the PLFM captures all multi-cause confounders,
and (ii) the PLFM estimates
the substitute confounder with consistency,
i.e., deterministicly, as the number of causes $D$ grows large.
To verify whether (i) holds, we rely on posterior predictive checking,
as described below, and only proceed with estimation of the causal effect if the check
passes.\footnote{%
We follow~\cite{Wang2019} and use $\bar{p}_B = \expect{ p_{B_i} } > 0.1$ as criterion.}
Regarding (ii), \cite{Chen2019} showed that estimates of many PLFM are
consistent if the number of causes and samples is large.
Note, that it does not imply that
we need to find the true confounder, just a deterministic
bijective transformation of it~\cite{Wang2019}.

\paragraph{Latent Factor Model.}
Let $\mathbf{f}_i \in \mathbb{R}^P$
be the feature vector describing the observed direct influences on $X_d$
for the $i$-th patient, except TIV, which we
account for by dividing all volume measures $X_1^v,\ldots,X_{D_1}^v$ by it.
Then, we can use an extended version of
probabilistic principal component analysis~(PPCA, \cite{Tipping1999})
to represent the $D$ causes in terms of the known causes $\mathbf{f}_i$
and the latent substitute confounder $\mathbf{z}_i \in \mathbb{R}^K$:
\begin{equation}\label{eq:ppca}
    \mathbf{x}_{i} = \mathbf{W}\mathbf{z}_i + \mathbf{A} \mathbf{f}_i + \bm{\varepsilon}_i,
    \quad \bm{\varepsilon}_i \sim \mathcal{N}(\mathbf{0}, \sigma_x^2 \mathbf{I}_D),
    \qquad \forall i = 1,\ldots,N,
\end{equation}
where $\mathbf{I}_D$ is a $D \times D$ identity matrix,
$\mathbf{W}$ a $D \times K$ loading matrix,
and $\mathbf{A}$ a $D \times P$ matrix of regression coefficients
for the known causes of $\mathbf{x}_i$, excluding TIV.

Our approach is not restricted to PPCA, in fact, any PLFM can be used
to infer the substitute confounder.
Here, we consider an extended version of
probabilistic matrix factorization~(BPMF, \cite{Salakhutdinov2008}) as an
alternative:
\begin{equation}\label{eq:bpmf}
    x_{ij} = \mathbf{z}_i^\top \mathbf{v}_j + \mathbf{A}_j^\top \mathbf{f}_{i} + \varepsilon_{ij},
    \quad \varepsilon_{ij} \sim \mathcal{N}(0, \sigma_x^2),
    \quad \forall i = 1,\ldots,N,
    ~\forall j = 1,\ldots,D,
\end{equation}
where $\mathbf{v}_{j}$ is a $K$-dimensional feature-specific
latent vector.
The full models with prior distributions are
depicted in fig.~\ref{fig:latent-factor-models}.

\paragraph{Posterior predictive checking.}
To ensure the PLFM can represent the joint
distribution over the observed causes well,
we employ posterior predictive checking to quantify how well
the PLFM fits the data~\cite[ch.~6]{Gelman2013}.
If the PLFM is a good fit, simulated data generated under the PLFM
should look similar to observed data.
First, we hold-out a randomly selected portion of the observed causes,
yielding $\mathbf{X}^\text{obs}$ to fit the factor model,
and $\mathbf{X}^\text{holdout}$ for model checking.
Next, we draw simulated data from the joint posterior predictive distribution.
If there is a systematic difference between the simulated and
the held-out data, we can conclude that the PLFM
does not represent the causes well.
We use the expected negative log-likelihood as test statistic
to compute the Bayesian p-value $p_B$
-- the probability
that the simulated is more extreme than the observed data~\cite{Gelman2013}:
\begin{equation}\label{eq:bayesian-p-value}
    T(\mathbf{x}_i) = \expect[\theta]{-\log p(\mathbf{x}_i\,|\,\theta\,)|\, \xiobs},
    \quad
    p_{B_i} = P(T(\xisim) \geq T(\xihold)\,|\,\xiobs) ,
\end{equation}
where $\theta$ is the set of all parameters of the PLFM.
We estimate $p_{B_i}$ by drawing $\xisim$ repeatedly
from the posterior predictive distribution and computing the proportion
for which $T(\xisim) \geq T(\xihold)$ (see algorithm~\ref{algo:deconfounder}).
Next, we will prove that the causal effect of neuroanatomical measures
on ADAS is identifiable by accounting for the substitute confounder.

\begin{figure}[t]
    \centering%
    \includegraphics[width=.99\textwidth]{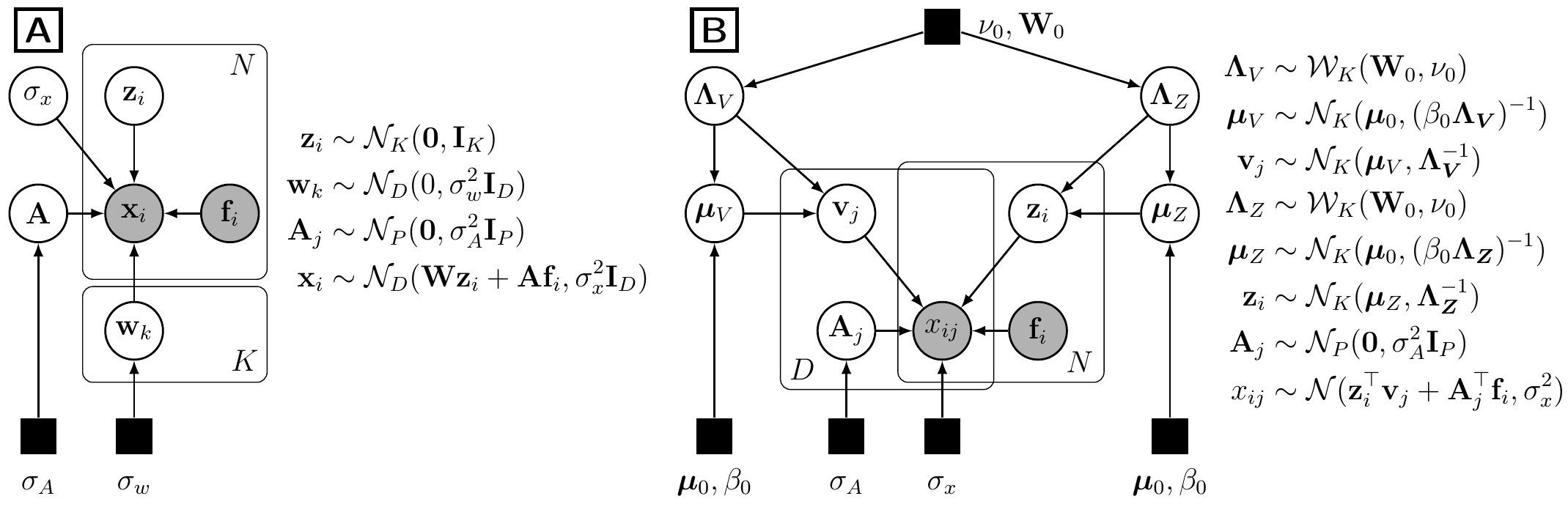}%
    \caption{Probabilistic latent factor models
    to estimate a $K$-dimensional
    substitute confounder $\mathbf{z}_i$ from $D$-dimensional causes $\mathbf{x}_i$.
    Circles are random variables,
    filled circles are observed,
    transparent circles are to be estimated.
    A: Probabilistic principal component analysis model.
    B: Bayesian probabilistic matrix factorization model.
    \label{fig:latent-factor-models}}
\end{figure}

\subsection{Identifiability in the Presence of a Substitute Confounder}

The theoretical results outlined above and described in detail in~\cite{Wang2019}
enable us to treat the substitute confounder
$\vz$ as if it were observed.
We will now prove that the average causal effect \eqref{eq:ace} that
a subset $\mathcal{S}$ of neuroanatomical structures have simultaneously
on the ADAS score is identifiable in this modified setting.
Therefore, we will again refer to fig.~\ref{fig:causal_graph}, but
replace $U$ with its observed substitute.

\begin{theorem}
    Assuming the data-generating process is
    faithful to the graphical model in fig.~\ref{fig:causal_graph},
    the causal effect
    $\expect{\vadas\,|\,do(\xis)}$ of a subset $\mathcal{S}$
    of neuroanatomical measures
    on ADAS is
    identifiable from a distribution over the observed
    neuroanatomical measures not in $\mathcal{S}$,
    age, and the substitute confounder $\vz$.
\end{theorem}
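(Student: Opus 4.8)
\noindent The plan is to reduce the statement to a single application of Pearl's back-door criterion. Since the substitute confounder $\vz$ has already been argued --- following Wang and Blei~\cite{Wang2019} --- to behave as though it were observed, I would first redraw fig.~\ref{fig:causal_graph} with the hidden node $U$ replaced by the observed node $\vz$. In the resulting graph every parent shared by the causes $X_1,\ldots,X_D$ is observed, and the problem reduces to exhibiting an admissible adjustment set $\mathcal{C}$ relative to the intervention variables $X_{\mathcal{S}}$ and the outcome $\vadas$. Once such a set is found, the back-door adjustment theorem yields
\begin{equation*}
    P(\text{adas}\,|\,do(\xis)) = \int P(\text{adas}\,|\,\xis, c)\,P(c)\,dc,
\end{equation*}
and multiplying by $\text{adas}$ and integrating recovers the average causal effect in \eqref{eq:ace}; identifiability follows because every term on the right-hand side is a functional of the observed distribution.

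\smallskip
\noindent The core of the work is to verify that $\mathcal{C} = \{\xes, \text{age}, \vz\}$ meets the two requirements of the back-door criterion. For the first requirement I would use \eqref{eq:condprob}: the causes are mutually independent given their common parents and hence do not cause one another, so none of $\xes$, age, or $\vz$ is a descendant of $X_{\mathcal{S}}$, and conditioning on $\mathcal{C}$ cannot distort a directed path emanating from $X_{\mathcal{S}}$. For the second requirement I would enumerate the back-door paths leaving $X_{\mathcal{S}}$: each must exit through a shared parent. Paths that leave through age or through $\vz$ are blocked immediately, since both are non-colliders contained in $\mathcal{C}$. The remaining shared parents --- p-Tau, gender, and education --- are assumed to influence $\vadas$ only by first passing through neurodegeneration, i.e.\ through a sibling cause in $\xes$ (TIV, a parent of the volume causes only, is removed beforehand by normalization); because those siblings are chain nodes on such paths, conditioning on $\xes$ severs them.

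\smallskip
\noindent The step I expect to be the main obstacle is controlling the collider structure that conditioning on the siblings $\xes$ introduces. Each $X_d \in \xes$ is a common effect of several shared parents, so conditioning on it opens collider paths of the form $X_{\mathcal{S}} \leftarrow \text{p-Tau} \rightarrow X_d \leftarrow \vz \rightarrow \vadas$, and I would have to check that every path reactivated at such a collider is re-blocked downstream by a fork or chain node lying in $\mathcal{C}$ --- concretely by $\vz$ (standing in for the full multi-cause confounding) or by age. Arguing that no reactivated path escapes to $\vadas$ is precisely where the faithfulness assumption and the property that $\vz$ captures all shared confounding are needed; it also requires confirming that no shared parent reaches $\vadas$ through an edge that bypasses both the causes and $\mathcal{C}$. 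Once this case analysis closes, $\mathcal{C}$ is admissible and the adjustment formula above establishes identifiability.
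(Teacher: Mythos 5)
Your proposal arrives at the same adjustment formula as the paper---adjustment for $\{\xes,\text{age},\vz\}$, i.e.\ the right-hand side of \eqref{eq:ace-proof-5}--\eqref{eq:ace-proof-6}---but by a genuinely different route. The paper never invokes the back-door criterion as such: it applies the law of total expectation over $(\text{age},\xes,\vz)$ under the post-intervention distribution \eqref{eq:ace-proof-1}, then neutralises the troublesome shared parent p-Tau by a rule-3 insertion of $do(\text{p-Tau})$ in \eqref{eq:ace-proof-2}, two applications of rule 2 to trade both interventions for conditioning in \eqref{eq:ace-proof-3}--\eqref{eq:ace-proof-4}, and finally drops p-Tau via $\vadas \indi \text{p-Tau} \mid \xis,\xes,\text{age},\vz$ in \eqref{eq:ace-proof-5}. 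That intervene-then-condition manoeuvre is exactly the paper's way of dealing with the collider structure you single out as the main obstacle: intervening on p-Tau severs its incoming edges (from A$\beta$ and the unobserved common causes behind the bidirected edge), so the paths that conditioning on the siblings $\xes$ would reopen never have to be enumerated. Your direct d-separation case analysis buys transparency about which features of fig.~\ref{fig:causal_graph} actually carry the proof, at the price of having to close that enumeration yourself; it does close in this graph, because every shared parent other than age and $\vz$ reaches $\vadas$ only through the causes, so any path reactivated at a collider $X_d \in \xes$ must continue through age, $\vz$, or another cause, each of which is a non-collider contained in your $\mathcal{C}$. Two small gaps remain relative to the paper: you leave that case analysis as an expectation rather than carrying it out, and you omit the positivity condition $P(x_{\mathcal{S}} \mid PA_{X_1,\ldots,X_D}) > 0$, which the paper states and verifies for its Gaussian PLFMs and which is needed for the conditional $P(\text{adas}\mid\xis,c)$ in your adjustment formula to be well defined at the intervened value.
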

\begin{proof}
    By assuming that the graph in fig.~\ref{fig:causal_graph}
    is faithful, we can apply the rules of do calculus~\cite[Theorem 3.4.1]{Pearl2000}
    to show that the post-intervention distribution can be
    identified from observed data and the substitute confounder.
    We denote by $\bar{\mathcal{S}}$ the complement of the set $\mathcal{S}$:
    \begin{align}
    \label{eq:ace-proof-1}
    \expect{\vadas\,|\,do(\xis)}
    =& \expect[age,\xes,z]{ \expect{\vadas\,|\,do(\xis),\xes,age,\vz} } \\
    \label{eq:ace-proof-2}
    =& \expect[age,\xes,z]{ \expect{\vadas\,|\,do(\xis),\xes,do(ptau),age,\vz} } \\
    \label{eq:ace-proof-3}
    =& \expect[age,\xes,z]{ \expect{\vadas\,|\,\xis,\xes,do(ptau),age,\vz} } \\
    \label{eq:ace-proof-4}
    =& \expect[age,\xes,z]{ \expect{\vadas\,|\,\xis,\xes,ptau,age,\vz} } \\
    \label{eq:ace-proof-5}
    =& \expect[age,\xes,z]{ \expect{\vadas\,|\,\xis,\xes,age,\vz} } \\
    \label{eq:ace-proof-6}
    \approx& {\textstyle
    \frac{1}{N} \sum_{i=1}^N
    \hat{\mathbb{E}}\left[
        \vadas\,|\,\xis,\mathbf{x}_{i,\bar{\mathcal{S}}},age_{i},\vz_{i}
    \right]} .
    \end{align}
    The equality in \eqref{eq:ace-proof-1} is due to the factorization given
    by the graph in fig.~\ref{fig:causal_graph},
    the one in \eqref{eq:ace-proof-2} due to rule 3 of do calculus,
    \eqref{eq:ace-proof-3} and \eqref{eq:ace-proof-4} are due
    to rule 2 of do calculus,
    and \eqref{eq:ace-proof-5} is due to
    $\vadas \indi \text{p-Tau} \mid \xis,\xes,age,\vz$.
    Finally, we can estimate the outer expectation by Monte Carlo
    and the inner expectation with a regression model
    from the observed data alone, if
    $P(x_\mathcal{S}\,|\,PA_{X_1,\ldots,X_D}) > 0$ for any subset $S$.
    This assumption holds for the proposed PLFM
    in \eqref{eq:ppca} and \eqref{eq:bpmf}, because
    their conditional distribution is a normal distribution, which
    is non-zero everywhere.
\end{proof}

\subsection{The Outcome Model}

Theorem 1 tells us that the
average causal effect can be estimated from the observed data
and the substitute confounder.
The final step in causal inference is to actually
estimate the expectation in \eqref{eq:ace-proof-6}, which
we do by fitting a model to predict
the ADAS score from neuroanatomical measures, age and the substitute confounder
(see algorithm~\ref{algo:deconfounder}).
We will use a linear model, but Theorem 1 holds when using a non-linear model too.

ADAS ranges between 0 and 85 with higher values indicating a higher
cognitive decline, hence
we convert ADAS to proportions in the interval $(0, 1)$
and use a Bayesian Beta regression model for prediction~\cite{Ferrari2004}.
Let $y_i$ denote the ADAS score of subject $i$, then the likelihood function is
\begin{equation*}
    L(\beta_0, \bm{\beta}, \phi)
    = \prod_{i=1}^N \frac{
        y_i^{\mu_i\phi-1}(1-y_i)^{(1-\mu_i)\phi-1}
    }{
        \mathrm{B}(\mu_i\phi,(1-\mu_i)\phi)
    } ,
    \quad
    \mu_i = \mathrm{logit}^{-1}(
        \beta_0
        + r(\mathbf{x}_i, \mathbf{f}_i)^\top \bm{\beta}
    ) ,
\end{equation*}
where B is the beta function,
and $\phi$ is a scalar scale parameter.
To account for unobserved confounding, we replace the original
neuroanatomical measures by the residuals with respect to
their reconstruction by the PLFM:
\begin{equation}\label{eq:residuals}
    r(\mathbf{x}_i, \mathbf{f}_i) = \mathbf{x}_i - \expect{X_1,\ldots,X_D \mid \hat{\mathbf{z}}_i, \mathbf{f}_i}, \qquad
    \hat{\mathbf{z}}_i = \expect{Z\,|\,\mathbf{x}_i^\text{obs}, \mathbf{f}_i} ,
\end{equation}
where the expectations are with respect to the selected PLFM.\footnote{Code available at \url{https://github.com/ai-med/causal-effects-in-alzheimers-continuum}}

\section{Experiments}

\begin{table}[tb]
    \centering%
    \caption{\label{tab:synthetic-ukb}%
    $\text{RMSE} \times 100$ of effects estimated by logistic regression
    compared to the true causal effects on semi-synthetic data.
    ROA is the error when only regressing out the observed
    confounder age,
    Oracle is the error when including all confounders.
    Columns with $\Delta$ denote
    the improvement over the `Non-causal' model.}%
    \begin{scriptsize}
    \begin{tabular}{r}
        \\
        \multirow{5}{*}{%
        \begin{tikzpicture}[baseline=0,every node/.style={inner sep=0pt}]
            \node[font=\scriptsize,rotate=90,text width=1.5cm,align=right]
            (txt) {least\\confounded};
            \draw[-latex] (txt.south east)++(1mm,-1mm) -- ++(3mm,0);
        \end{tikzpicture}}
        \\
        \\
        \\
        \\
        \\
        \\
        \\
        \\
        \multirow[b]{5}{*}{%
        \begin{tikzpicture}[baseline=0,every node/.style={inner sep=0pt}]
            \node[font=\scriptsize,rotate=90,text width=1.5cm,xshift=1.5ex]
                (txtb) {most\\confounded};
            \draw[-latex] (txtb.south west)++(1mm,1mm) -- ++(3mm,0);
        \end{tikzpicture}}
        \\
        \\
        \\
        \\
        \\
        \\
    \end{tabular}
    \rowcolors{2}{}{gray!20}%
    \setlength{\tabcolsep}{3pt}
    \begin{tabular}{lrrrrrrrr}
        \toprule
        \nicefrac{$\nu_x$}{$\nu_z$}&  Non-causal & ROA & PPCA&  BPMF &  Oracle & $\Delta$ROA & $\Delta$PPCA & $\Delta$BPMF\\
        \midrule
        \nicefrac{10}{1} &     20.121 &       19.416 &  17.449 &  18.033 &  17.917 &                0.705 &         2.672 &         2.087 \\
        \nicefrac{5}{1}  &     21.078 &       20.436 &  18.560 &  18.895 &  18.781 &                0.643 &         2.518 &         2.183 \\
        \nicefrac{4}{1}  &     21.505 &       20.889 &  19.050 &  19.296 &  19.169 &                0.617 &         2.456 &         2.210 \\
        \nicefrac{3}{1}  &     22.169 &       21.590 &  19.820 &  19.933 &  19.782 &                0.580 &         2.349 &         2.236 \\
        \nicefrac{5}{2}  &     22.653 &       22.097 &  20.382 &  20.408 &  20.233 &                0.556 &         2.270 &         2.244 \\
        \nicefrac{5}{3}  &     23.911 &       23.417 &  21.837 &  21.699 &  21.438 &                0.494 &         2.073 &         2.212 \\
        \nicefrac{3}{2}  &     24.275 &       23.798 &  22.261 &  22.083 &  21.796 &                0.477 &         2.014 &         2.192 \\
        \nicefrac{1}{1}  &     25.802 &       25.391 &  24.022 &  23.735 &  23.319 &                0.411 &         1.780 &         2.067 \\
        \nicefrac{2}{3}  &     27.464 &       27.116 &  25.932 &  25.596 &  25.040 &                0.348 &         1.531 &         1.867 \\
        \nicefrac{3}{5}  &     27.899 &       27.567 &  26.434 &  26.088 &  25.502 &                0.333 &         1.465 &         1.811 \\
        \nicefrac{2}{5}  &     29.564 &       29.284 &  28.354 &  28.018 &  27.307 &                0.280 &         1.210 &         1.546 \\
        \nicefrac{1}{3}  &     30.297 &       30.037 &  29.199 &  28.877 &  28.124 &                0.259 &         1.097 &         1.419 \\
        \nicefrac{1}{4}  &     31.384 &       31.153 &  30.448 &  30.167 &  29.354 &                0.230 &         0.936 &         1.216 \\
        \nicefrac{1}{5}  &     32.179 &       31.967 &  31.364 &  31.118 &  30.266 &                0.212 &         0.815 &         1.060 \\
        \nicefrac{1}{10} &     34.318 &       34.154 &  33.816 &  33.699 &  32.778 &                0.164 &         0.502 &         0.620 \\
        \bottomrule
        \end{tabular}
    \end{scriptsize}
\end{table}

\subsubsection{Semi-synthetic Data.}

In our first experiment, we evaluate how well causal effects
can be recovered when ignoring all confounders,
using only the observed confounder (age), using the observed and unobserved
confounders (oracle), and using the observed and substitute confounder computed by
\eqref{eq:ppca} or \eqref{eq:bpmf}.
We use T1-weighted magnetic resonance imaging brains scans
from $N$\,=\,11,800 subjects from UK Biobank~\cite{miller2016multimodal}.
From each scan, we extract 19 volume measurements
with FreeSurfer~5.3~\cite{Fischl2012}
and create a synthetic binary outcome.
For each volume, we use age and gender as known causes $\mathbf{f}_i$
and estimate $\hat{x}_{ij} = \expect{X_{ij}\,|\,\mathbf{f}_i}$
via linear regression.
We use age as an observed confounder and
generate one unobserved confounder $u_k$ by
assigning individuals to clusters with varying percentages
of positive labels.
First, we obtain the first two principal components across
all volumes, scale individual scores to $[0; 1]$, and
cluster the projected data into 4 clusters using $k$-means
to assign $u_k \in \{1,2,3,4\}$ and scale $\sigma_k$ of the noise term.
Causal effects follow a sparse normal distribution ($\mathcal{N}_\text{sp}$),
where all values in the 20--80th percentile range are zero, hence
only a small portion of volumes have a non-zero causal effect.
Let $\nu_x$, $\nu_z$, $\nu_\varepsilon = 1- \nu_x - \nu_z$
denote how much variance can be explained by the
causal effects, confounding effects, and the noise, respectively,
then for the $i$-th instance in cluster $k$, we generate a binary outcome $y_i$ as:
\begin{equation*}
\begin{split}
    y_i &\sim \mathrm{Bernoulli}(%
    {\textstyle
        \mathrm{logit}^{-1}(
        \beta_0 +
        \hat{\mathbf{x}}_i \bm{\beta} \frac{\sqrt{\nu_x}}{\sigma_x}
        + u_k \cdot \frac{\sqrt{0.9 \nu_z}}{\sigma_z}
        + \text{age}_i \gamma \cdot \frac{\sqrt{0.1 \nu_z}}{\sigma_\text{age}}
        + \varepsilon_i \frac{\sqrt{\nu_\varepsilon}}{\sigma_\varepsilon}
    )}), \\
    \beta_j &\sim \mathcal{N}_\text{sp}(0, 0.5), \quad
    \gamma \sim \mathcal{N}(0, 0.2), \quad
    \sigma_k \sim 1 + \mathrm{InvGamma}(3, 1), \quad
    \varepsilon_i \sim \mathcal{N}(0, \sigma_k),
\end{split}
\end{equation*}
where $\sigma_x$, $\sigma_z$, $\sigma_\text{age}$, and $\sigma_\varepsilon$ are standard deviations
with respect to $\hat{\mathbf{x}}_i \bm{\beta}$, $u$, $\text{age}_i$ and $\varepsilon_i$ for
$i=1,\ldots,N$.
Finally, we choose $\beta_0$ such that the
positive and negative class are roughly balanced.

Table \ref{tab:synthetic-ukb} shows the root mean squared error (RMSE)
with respect
to the true causal effect of a logistic regression model
across 1,000 simulations
for various $\nu_x$, $\nu_z$,
and $\nu_\varepsilon=0.1$.
We used $K=5$ substitute confounders and
both PLFM
passed the posterior predictive check with
$\bar{p}_B = 0.269$ (BPMF) and $\bar{p}_B = 0.777$ (PPCA), despite that
the true data generation model differs.
As expected, by ignoring confounding completely (first column), the RMSE is the highest.
When only accounting for the known confounder age (second column),
the RMSE decreases slightly. The RMSE reduces considerably
when using a substitute confounder and achieves an improvement
2.9 -- 5.5 higher than that of the age-only model.
Finally, the results show that there is a cost to using a substitute confounder:
using all confounders (Oracle) leads to the lowest RMSE.

\subsubsection{Alzheimer's Disease Data.}
In this experiment, we study the causal effect of
neuroanatomical measures on ADAS using
data from the Alzheimer's Disease Neuroimaging
Initiative~\cite{jack2008alzheimer}.
We only focus on effects due to
Alzheimer's pathologic change and not other forms of dementia.
Therefore, we only include patients with abnormal amyloid biomarkers~\cite{Jack2018}.
We extract 14 volume and 8 thickness measures using
FreeSurfer~\cite{Fischl2012} for 711 subjects
(highly correlated measures are removed).
Since the average causal effect is fully parameterized
by the coefficients of the linear Beta regression model,
we can compare estimated coefficients
of the proposed approach
with $K=6$ substitute confounders,
with that of a model ignoring all confounders
(Non-causal), and of a model trained on measures where
age, gender, and education has been regressed-out.

\begin{figure}[tb]
    \centering
    \includegraphics[width=\textwidth]{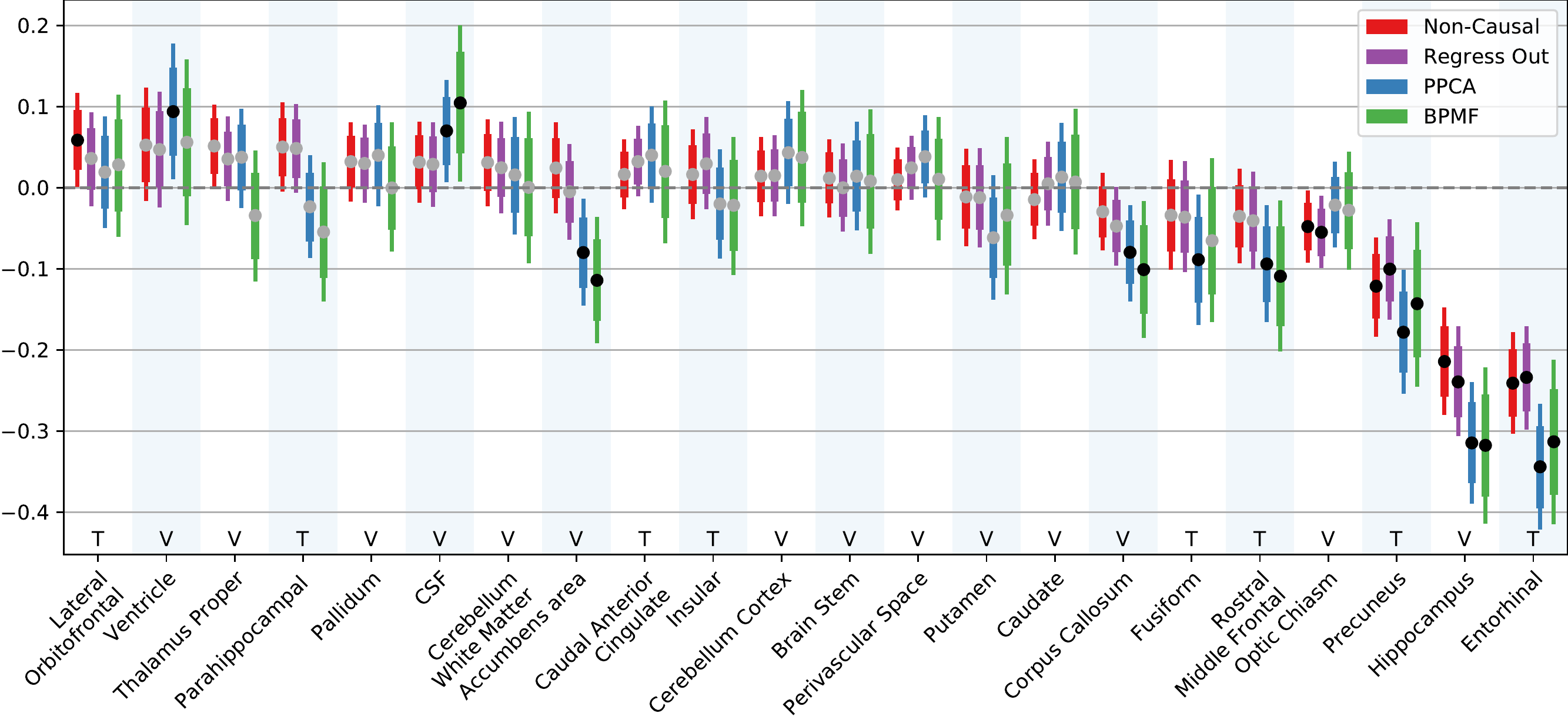}%
    \caption{\label{fig:coef-adni}%
    Mean coefficient (dot), 80\% (thick line), and 95\% (thin line)
    credible interval of \underline{v}olume and \underline{t}hickness %
    measures. Significant effects are marked with a black dot.}%
\end{figure}

The BPMF ($\bar{p}_B = 0.293$) and PPCA model ($\bar{p}_B = 0.762$)
passed the posterior predictive check;
estimated coefficients are depicted in fig.~\ref{fig:coef-adni}.
Lateral orbitofrontal thickness and optic chiasm
become non-significant after correcting for unobserved
confounding, whereas
rostral anterior cingulate thickness,
CSF volume,
accumbens volume, and
corpus callosum volume
become significant.
The biggest change concerns accumbens volume, which
is associated with cognitive \emph{improvement} in the
non-causal model, but is a cause for cognitive \emph{decline}
in the causal models.
This correction is justified, because
the accumbens is part of the limbic circuit
and thus shares its vulnerability to degenerate
during cognitive decline~\cite{Jong2012}.
An analysis based on the non-causal model would have
resulted in a wrong conclusion.
The result that atrophy of corpus callosum is causal
seems to be plausible too,
because it is a known marker of the progressive
interhemispheric disconnection in AD~\cite{Delbeuck2003}.
While finding literature on the absence of an
effect on cognition is difficult,
we believe a causal effect of atrophy of the optic nerve
to be unlikely, because
its main function is to transmit visual information.
It is more likely that the non-causal model
picked up an association due to aging-related confounding instead.
Finally, we want to highlight the change in parahippocampal thickness.
Previous research suggests that thinning is common in AD~\cite{Schwarz2016},
which would only be captured correctly after correcting for unobserved
confounding, as indicated by a negative mean coefficient.
In contrast, when only accounting for the known confounder age,
the estimated mean coefficient remains positive.

\section{Conclusion}

Inferring causal effects from observational neuroimaging data is challenging,
because it requires expressing the causal question in a graphical model,
and various unknown sources of confounding often render the causal effects
unidentifiable.
We tackled this task by deriving a causal graph from the current clinical
knowledge on the Alzheimer's disease continuum, and
proposed a latent factor model approach to
estimate a substitute for the
unobserved confounders.
Our experiments on semi-synthetic data showed that our
proposed approach can recover causal effects more accurately than a model
that only accounts for observed confounders or ignores confounding.
Analyses on the causal effects on cognitive decline
due to Alzheimer's
revealed that accounting for unobserved confounding
reveals important causes of cognitive decline that
otherwise would have been missed.

\subsubsection*{Acknowledgements.}
This research was supported by the Bavarian State Ministry of Science and the Arts and coordinated by the Bavarian Research Institute for Digital Transformation,
and the Federal Ministry of Education and Research in the call for Computational Life Sciences (DeepMentia, 031L0200A).
\bibliographystyle{splncs04}
\bibliography{references}

\begin{thebibliography}{10}
\providecommand{\url}[1]{\texttt{#1}}
\providecommand{\urlprefix}{URL }
\providecommand{\doi}[1]{https://doi.org/#1}

\bibitem{AlfaroAlmagro2021}
Alfaro-Almagro, F., McCarthy, P., Afyouni, S., Andersson, J.L.R., Bastiani, M.,
  et~al.: {Confound modelling in UK Biobank brain imaging}. {Neuroimage}
  \textbf{224},  117002 (2021)

\bibitem{Barnes2010}
Barnes, J., Ridgway, G.R., Bartlett, J., Henley, S.M., Lehmann, M., et~al.:
  {Head size, age and gender adjustment in {MRI} studies: a necessary
  nuisance?} {Neuroimage}  \textbf{53}(4),  1244--1255 (2010)

\bibitem{Chen2019}
Chen, Y., Li, X., Zhang, S.: Structured latent factor analysis for large-scale
  data: Identifiability, estimability, and their implications. J Am Stat Assoc
  \textbf{115},  1756--1770 (2020)

\bibitem{Crary2014}
Crary, J.F., Trojanowski, J.Q., Schneider, J.A., Abisambra, J.F., Abner, E.L.,
  et~al.: {Primary age-related tauopathy (PART): a common pathology associated
  with human aging}. Acta Neuropathol  \textbf{128}(6),  755--766 (2014)

\bibitem{Delbeuck2003}
Delbeuck, X., Van~der Linden, M., Collette, F.: {Alzheimer's Disease as a
  Disconnection Syndrome?} Neuropsychol Rev  \textbf{13}(2),  79--92 (2003)

\bibitem{dukart2011age}
Dukart, J., Schroeter, M.L., Mueller, K.: Age correction in dementia--matching
  to a healthy brain. PloS One  \textbf{6}(7),  e22193 (2011)

\bibitem{Ferrari2004}
Ferrari, S., Cribari-Neto, F.: Beta regression for modelling rates and
  proportions. J Appl Stat  \textbf{31}(7),  799--815 (2004)

\bibitem{Ferretti2020}
Ferretti, M.T., Martinkova, J., Biskup, E., Benke, T., Gialdini, G., et~al.:
  {Sex and gender differences in Alzheimer's disease: current challenges and
  implications for clinical practice}. Eur J Neurol  \textbf{27}(6),  928--943
  (2020)

\bibitem{Fischl2012}
Fischl, B.: {FreeSurfer}. {Neuroimage}  \textbf{62}(2),  774--781 (2012)

\bibitem{fortin2018harmonization}
Fortin, J.P., Cullen, N., et~al.: Harmonization of cortical thickness
  measurements across scanners and sites. Neuroimage  \textbf{167},  104--120
  (2018)

\bibitem{Gelman2013}
Gelman, A., Carlin, J.B., Stern, H.S., Rubin, D.B., Dunson, D.B.: {Bayesian
  Data Analysis}. Chapman and Hall/CRC, 3rd edn. (2014)

\bibitem{Hedden2004}
Hedden, T., Gabrieli, J.D.E.: Insights into the ageing mind: a view from
  cognitive neuroscience. Nat Rev Neurosci  \textbf{5}(2),  87--96 (2004)

\bibitem{Jack2018}
Jack, C.R., Bennett, D.A., Blennow, K., Carrillo, M.C., Dunn, B., et~al.:
  {NIA-AA Research Framework: Toward a biological definition of Alzheimer's
  disease}. Alzheimers Dement  \textbf{14}(4),  535--562 (2018)

\bibitem{jack2008alzheimer}
Jack, C.R., Bernstein, M.A., Fox, N.C., Thompson, P., et~al.: {The Alzheimer's
  disease neuroimaging initiative (ADNI): MRI methods}. J Magn Reson Imaging
  \textbf{27}(4),  685--691 (2008)

\bibitem{Jack2013}
Jack, C.R., Knopman, D.S., Jagust, W.J., Petersen, R.C., Weiner, M.W., et~al.:
  {Tracking pathophysiological processes in Alzheimer's disease: an updated
  hypothetical model of dynamic biomarkers}. Lancet Neurol  \textbf{12}(2),
  207--216 (2013)

\bibitem{Jong2012}
de~Jong, L.W., Wang, Y., White, L.R., Yu, B., van Buchem, M.A., Launer, L.J.:
  Ventral striatal volume is associated with cognitive decline in older people:
  a population based {MR}-study. Neurobiol Aging  \textbf{33},  424.e1--424.10
  (2012)

\bibitem{Koikkalainen2012}
Koikkalainen, J., Pölönen, H., Mattila, J., {van Gils}, M., Soininen, H.,
  Lötjönen, J., et~al.: Improved classification of {Alzheimer's} disease data
  via removal of nuisance variability. {PLoS} One  \textbf{7}(2),  e31112
  (2012)

\bibitem{Kostro2014}
Kostro, D., Abdulkadir, A., Durr, A., Roos, R., Leavitt, B.R., et~al.:
  Correction of inter-scanner and within-subject variance in structural {MRI}
  based automated diagnosing. {Neuroimage}  \textbf{98},  405--415 (2014)

\bibitem{linn2016addressing}
Linn, K.A., Gaonkar, B., Doshi, J., Davatzikos, C., Shinohara, R.T.: Addressing
  confounding in predictive models with an application to neuroimaging. Int J
  Biostat  \textbf{12}(1),  31--44 (2016)

\bibitem{Lockhart2014}
Lockhart, S.N., DeCarli, C.: Structural imaging measures of brain aging.
  Neuropsychol Rev  \textbf{24}(3),  271--289 (2014)

\bibitem{Long2019}
Long, J.M., Holtzman, D.M.: {Alzheimer Disease: An Update on Pathobiology and
  Treatment Strategies}. Cell  \textbf{179}(2),  312--339 (2019)

\bibitem{miller2016multimodal}
Miller, K.L., Alfaro-Almagro, F., Bangerter, N.K., Thomas, D.L., Yacoub, E.,
  et~al.: {Multimodal population brain imaging in the UK Biobank prospective
  epidemiological study}. Nat Neurosci  \textbf{19}(11),  1523--1536 (2016)

\bibitem{Mohs1997}
Mohs, R.C., Knopman, D., Petersen, R.C., Ferris, S.H., Ernesto, C., et~al.:
  Development of cognitive instruments for use in clinical trials of
  antidementia drugs. Alzheimer Dis Assoc Disord  \textbf{11},  13--21 (1997)

\bibitem{Pearl2000}
Pearl, J.: {Causality: Models, reasoning, and inference}. Cambridge University
  Press (2000)

\bibitem{rao2017predictive}
Rao, A., Monteiro, J.M., Mourao-Miranda, J.: Predictive modelling using
  neuroimaging data in the presence of confounds. Neuroimage  \textbf{150},
  23--49 (2017)

\bibitem{Rodrigue2009}
Rodrigue, K.M., Kennedy, K.M., Park, D.C.: {Beta-Amyloid Deposition and the
  Aging Brain}. Neuropsychol Rev  \textbf{19}(4),  436--450 (2009)

\bibitem{Salakhutdinov2008}
Salakhutdinov, R., Mnih, A.: {Bayesian Probabilistic Matrix Factorization Using
  Markov Chain Monte Carlo}. In: ICML. pp. 880--887 (2008)

\bibitem{Schwarz2016}
Schwarz, C.G., Gunter, J.L., Wiste, H.J., Przybelski, S.A., Weigand, S.D.,
  et~al.: A large-scale comparison of cortical thickness and volume methods for
  measuring alzheimer's disease severity. Neuroimage Clin  \textbf{11},
  802--812 (2016)

\bibitem{Snoek2019}
Snoek, L., Mileti{\'{c}}, S., Scholte, H.S.: How to control for confounds in
  decoding analyses of neuroimaging data. {Neuroimage}  \textbf{184},  741--760
  (2019)

\bibitem{Stern2020}
Stern, Y., Arenaza-Urquijo, E.M., Bartrés-Faz, D., Belleville, S., Cantilon,
  M., et~al.: Whitepaper: Defining and investigating cognitive reserve, brain
  reserve, and brain maintenance. Alzheimer's Dement  \textbf{16},  1305--1311
  (2020)

\bibitem{Stonnington2008}
Stonnington, C.M., Tan, G., Klöppel, S., Chu, C., Draganski, B., Jack, C.R.,
  et~al.: {Interpreting scan data acquired from multiple scanners: A study with
  Alzheimer's disease}. {Neuroimage}  \textbf{39}(3),  1180--1185 (2008)

\bibitem{Tipping1999}
Tipping, M.E., Bishop, C.M.: {Probabilistic Principal Component Analysis}. J R
  Stat Soc Series B Stat Methodol  \textbf{61}(3),  611--622 (1999)

\bibitem{Wachinger2020}
Wachinger, C., Rieckmann, A., P{\"o}lsterl, S.: {Detect and Correct Bias in
  Multi-Site Neuroimaging Datasets}. Med Image Anal  (2020)

\bibitem{Wang2019}
Wang, Y., Blei, D.M.: {The Blessings of Multiple Causes}. J Am Stat Assoc pp.
  1--71 (2019)

\end{thebibliography}

\end{document}